\newtheorem{definition}{Definition}
\newtheorem{proposition}{Proposition}
\newcommand{\ketbra}[2]{|#1\rangle \langle #2|}
\begin{document}
\title{Identifying the value of a random variable unambiguously: Quantum versus classical approaches}

\author{Saronath Halder}
\affiliation{Centre for Quantum Optical Technologies, Centre of New Technologies, University of Warsaw, Banacha 2c, 02-097 Warsaw, Poland}

\author{Alexander Streltsov}
\affiliation{Centre for Quantum Optical Technologies, Centre of New Technologies, University of Warsaw, Banacha 2c, 02-097 Warsaw, Poland}
\affiliation{Institute of Fundamental Technological Research, Polish Academy of Sciences, Pawińskiego 5B, 02-106 Warsaw, Poland}

\author{Manik Banik}
\affiliation{Department of Physics of Complex Systems, S.N. Bose National Center for Basic Sciences, Block JD, Sector III, Salt Lake, Kolkata 700106, India}

\begin{abstract}
Quantum resources may provide advantage over their classical counterparts. Theoretically, in certain tasks, this advantage can be very high. In this work, we construct such a task based on a game, mediated by Referee and played between Alice and Bob. Referee sends Alice a value of a random variable. At the same time, Referee also sends Bob some partial information regarding that value. Here partial information can be defined in the following way. Bob gets the information of a random set which must contain the value of the variable, that is sent to Alice by the Referee, along with other value(s). Alice is not allowed to know what information is sent to Bob by the Referee. Again, Bob does not know which value of the random variable is sent to Alice. Now, the game can be won if and only if Bob can unambiguously identify the value of the variable, that is sent to Alice, with some nonzero probability, no matter what information Bob receives or which value is sent to Alice. However, to help Bob, Alice sends some limited amount of information to him, based on any strategy which is fixed by Alice and Bob before the game begins. We show that if Alice sends limited amount of classical information then the game cannot be won while the quantum analogue of the `limited amount of classical information' is sufficient for winning the game. Thus, it establishes quantum advantage. We further analyze several variants of the game and provide certain bounds on the success probabilities. Moreover, we establish connections between trine ensemble, mutually unbiased bases, and the encoding-decoding strategies of those variants. We also discuss the role of quantum coherence in the present context. 
\end{abstract}
\maketitle

\section{Introduction}
Efficient utilization of non-classical features of elementary quantum systems, such as coherent superposition, quantum entanglement, measurement incompatibility, and indefinite causal order leads to advantageous information and communication protocols that otherwise are not possible with classical resources \cite{Dowling03, Deutsch20}. A few innovative such protocols are quantum cryptography \cite{Ekert91}, quantum superdense coding \cite{Bennett92}, and quantum teleportation \cite{Bennett93} that establish quantum advantages in communication scenario by invoking quantum entanglement between the sender and the receiver. Quantum advantages, however, are hard to find and sometimes constrained by fundamental no-go theorems. For instance, Holevo's no-go theorem \cite{Holevo73} limits the capacity of a quantum channel as that of its classical counterpart when no preshared entanglement between the sender and the receiver is allowed. More recently, a stronger version of this no-go theorem has been obtained which establishes that the classical information storage in an $n$-level quantum system is not better than the corresponding classical $n$-state system \cite{Frenkel15}.

In this work, we report a novel communication advantage of an elementary quantum system without invoking any preshared entanglement between the sender (Alice) and the receiver (Bob). At this point the task of random access codes (RAC) is worth mentioning which also depicts communication advantages of quantum systems between unentangled sender and receiver. In RAC a long message is encoded into fewer bits with the ability to recover (decode) any one of the initial bits with high degree of success probability. Historically quantum random access codes (QRAC) were first studied by Wiesner by the name `conjugate coding' \cite{Wiesner83}. Later it was re-analyzed by Ambainis {\it et al.} \cite{Ambainis99, Ambainis02} and subsequently draws a huge research interest \cite{Aaronson05, Gavinsky05, Aaronson07, Spekkens09, Pawowski09, Li12, Banik15, Tavakoli15, Ambainis19, Vaisakh21, Patra22}. The task we consider, however, is different than the RAC task and can be best described in terms of a game, mediated by Referee and played between Alice and Bob. Referee sends Alice a value of a random variable. At the same time, Referee also sends Bob some partial information regarding that value. Here partial information can be defined in the following way. Bob gets the information of a random set which must contain the value of the variable, that is sent to Alice by the Referee, along with other value(s). Alice is not allowed to know what information is sent to Bob by the Referee. Again, Bob does not know which value of the random variable is sent to Alice. Now, the game can be won if and only if Bob can unambiguously identify the value of the variable, that is sent to Alice, with some nonzero probability, no matter what information Bob receives or which value is sent to Alice. However, to help Bob, Alice sends some limited information to him. This is based on any strategy which is fixed by Alice and Bob before the game begins. We mention here that only deterministic strategies are considered in this work, i.e., the parties do not use any randomness. For example, when Alice sends a cbit, she sends either `0' or `1'. On the other hand, when she sends a qubit, she actually sends a pure qubit state. No additional correlation (local or global) is used by Alice and Bob. See also Fig.~\ref{fig1} for the description of the present game. 

\begin{figure*}
\includegraphics[scale=0.4]{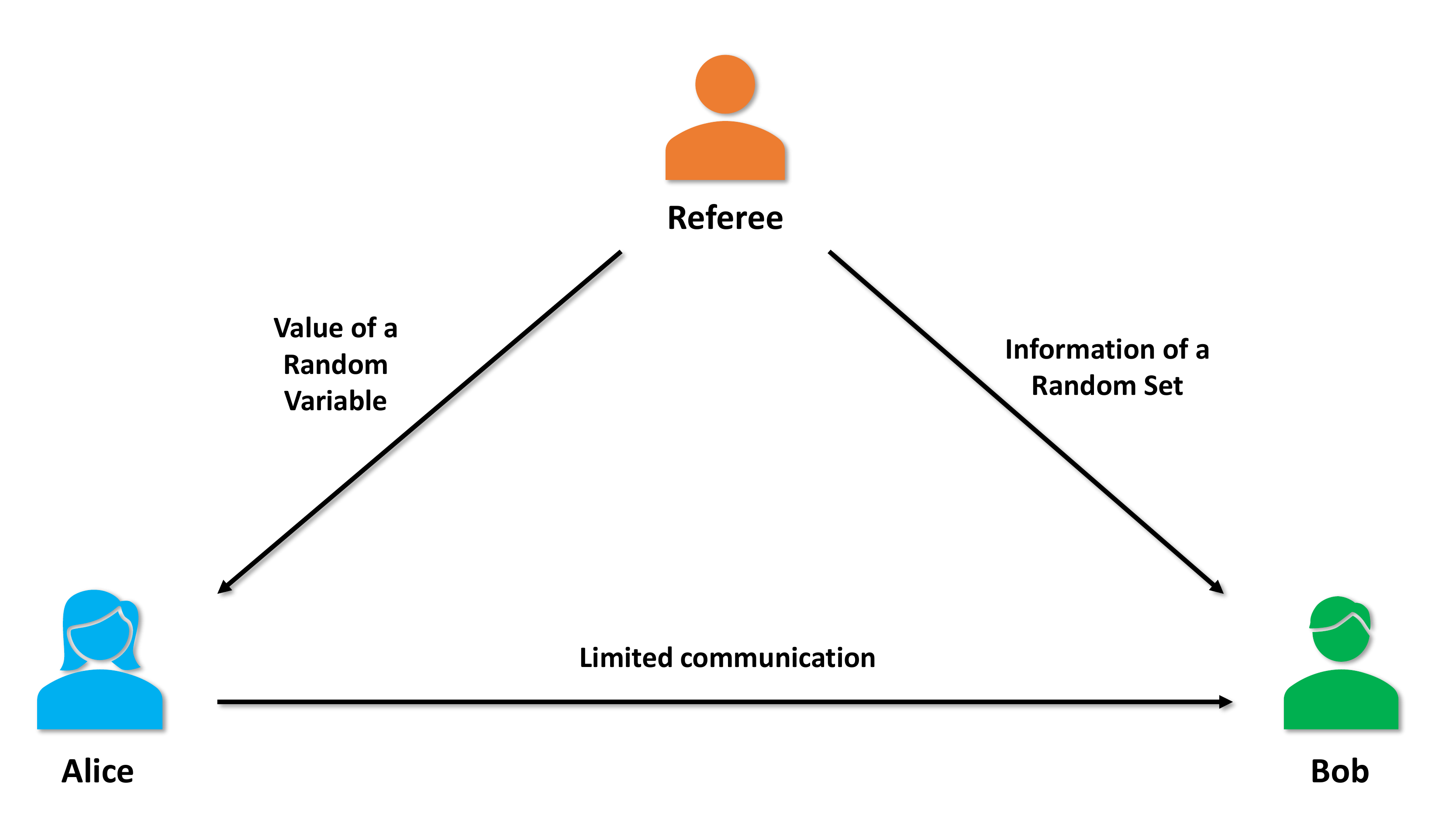}
\caption{There are three spatially separated parties: Referee, Alice, and Bob. Referee sends a value of a random variable to Alice. At the same time, Referee sends the information of a random set to Bob. The set contains the value of the random variable, sent to Alice, along with some other value(s). Remember that Alice does not know about the information of the random set which is sent to Bob but she knows the size of the set. Similarly, Bob does not know which value of the random variable is sent to Alice. The task of Bob is to unambiguously identify the value of the random variable, sent to Alice, with some nonzero probability all the time, i.e., no matter what information he receives from the Referee or which value of the random variable is sent to Alice by the Referee. However, Alice is allowed to send a limited amount of information to Bob based on any pre-decided strategy. This is to help Bob in identifying the value of the variable unambiguously. Note that both Alice and Bob know about all possible values of the random variable.}\label{fig1}
\end{figure*}

Note that the game can be won perfectly if there is no restriction imposed on the available communication from Alice to Bob. Interesting situations arise only when the allowed communication is limited. In that situation, Bob is not able to identify the value of the random variable perfectly all the time. Then, it can be explored, how well Bob can identify the value of the random variable no matter what information he receives. In particular, this becomes a probabilistic case which helps us to explore the advantages and limitations of resources. In this direction (the direction of the probabilistic study) there are two popular settings, researchers usually adopt. One is the minimum error strategy: Bob can try to identify the value of the random variable minimizing the error. The other is the unambiguous strategy: Bob can try to identify the value of the random variable without committing any error but in this case there will be a nonzero probability of inconclusive outcome. In other words, the unambiguous strategy can be explained as: either answering the right result without any error, or answering `inconclusive'. But the former answer has to occur with non-zero probability. In this work, we consider the second strategy, i.e., the unambiguous strategy and explore corresponding bounds on the success probabilities, considering different cases. We also explore the role of several mathematical concepts in these cases. 

We are now ready to define the task of Bob more accurately. This is done by following the definition of unambiguously distinguishable set of quantum states. Suppose, a set of quantum states is given and we want to distinguish these states unambiguously. If a particular state of the given set can be identified error-freely with some nonzero probability then we say that the state is unambiguously identifiable. Moreover, if all the states of a given set are unambiguously identifiable then the set is unambiguously distinguishable \cite{Chefles04, Halder22} and the task of the state distinguishability can be accomplished unambiguously with some nonzero probability. Similarly, here we are interested in those situations where all values of a random variable are unambiguously identifiable, no matter what information Bob receives. Such a situation implies that the task of determining the value of the random variable unambiguously with some nonzero probability can be accomplished. So, we set the condition of winning the game as: the game can be won if and only if Bob is able to identify the value of the random variable error-freely with some nonzero probability all the time (no matter which value is sent to Alice by the Referee or what information Bob receives). 

Previously, a few communication tasks have been designed where huge separation between classical and quantum resources has been reported \cite{Massar01, Buhrman01, Wolf03, Galvao03, Perry15}. In our game also, we report huge advantage of qubit communication over classical communication and theoretically, this advantage might be increased up to an arbitrary height if the dimension of the random variable increases. In Sec.~\ref{sec2}, we first present an elementary version of the game. We also present several variants of this game in this section. We find the connection of trine ensemble with the encoding-decoding strategy of a variant. Then, we provide several generalizations of this game for higher dimensional random variable. These generalizations are given in Sec.~\ref{sec3} and in Sec.~\ref{sec4}. Eventually, we present several bounds on the success probabilities and find a connection with mutually unbiased bases and the encoding-decoding strategy of a variant of the game. In a few cases, we derive the optimal success probabilities and discuss about their achievability. In Sec.~\ref{sec5}, another generalization is given. We also discuss the role of coherence in this game. Finally, the conclusion is drawn in Sec.~\ref{sec6} mentioning some open problems for further research.

\section{An elementary version of the game}\label{sec2}
{\it Description.}~We assume that there are three parties, Referee, Alice, and Bob. Referee sends the value of a three-dimensional random variable to Alice. We denote the variable by $X$ and its dimension by $d$. In this section, we assume $d =3$. Here $X$ is a discrete variable, so, one may formulate our communication task without defining the dimension of $X$. However, for the convenience of equations that will appear later in the paper, we define the dimension of the variable X as the number of available values. In fact, in some cases, this quantity also helps us to understand the quality or quantity of the communication required (from Alice to Bob) to accomplish our task. 

So, Referee sends Alice $x_i$ (value of $X$), while $x_i$ belongs to the set $\{x_1, x_2, x_3\}$. This set is known to both Alice and Bob. On the other hand, Referee sends Bob `$j$', where this `$j$' is associated with a random set $S_j$. We denote by $n:=|S_j|$ the size of $S_j$. Here in this section we consider $n = 2$. This value is known to both Alice and Bob. Depending on the values of $d$ and $n$, three random sets can be defined: $S_1\equiv\{x_1, x_2\},~ S_2\equiv\{x_2, x_3\},~ S_3\equiv\{x_3, x_1\}$. These definitions are also known to both Alice and Bob. Note that two things, (i) sending $x_i$ to Alice and (ii) sending `$j$' to Bob, are simultaneously done by Referee. Again, information of which set is sent to Bob, is not known to Alice and similarly, which value of the random variable is sent to Alice, is not known to Bob. But $j$ will be chosen (randomly) in such a way that $S_j$ must contain the particular value of $X$ which is sent to Alice by the Referee, along with some other value of $X$. For example, suppose $x_1$ is sent to Alice then either 1 or 3 is sent to Bob. If 1 is sent then it means Bob is instructed that the value of the variable, which is sent to Alice, belongs to $S_1$. Similarly, if 3 is sent then it means Bob is instructed that the value of the variable, which is sent to Alice, belongs to $S_3$. Clearly, before receiving the value of the random variable by Alice, Alice and Bob know that the information regarding any set can be sent to Bob and these sets are equally probable. When we say that the sets are equally probable, one may point out that this applies only to the sets containing the value of $X$ which is sent to Alice by the Referee because the probability of the other set is null. While this is correct, in the overall process, all sets are equally probable. Therefore, Alice and Bob have to fix an encoding-decoding strategy accordingly. To help Bob in identifying the value of the variable, Alice is allowed to send a classical bit (cbit) or a quantum bit (qubit). The game can be won if and only if Bob is able to identify $x_i$ unambiguously with some nonzero probability $\forall j =1, 2, 3$. Remember that Alice does not know the information of which set is sent to Bob, but she knows that the set must contain the value of the random variable, which she has received, along with some other value. 

In this context, the first thing we want to prove is the following. If Alice sends only a cbit to Bob then it is not always possible for Bob to unambiguously identify the value of the random variable, which Alice receives, with some nonzero probability. The intuition of this can be found through a simple example. Suppose, Alice tries to fix an encoding strategy and for this purpose she thinks about computing a function:
\begin{equation*}
\begin{array}{c}
\mathcal{F} = 0 ~~~ \mbox{if} ~~~ x_i = x_1,\\[1 ex]
\mathcal{F} = 1 ~~~ \mbox{if} ~~~ x_i \neq x_1.
\end{array}  
\end{equation*}
This is one of the simplest forms that can be computed by Alice and before the game starts, Alice can inform about computing this function to Bob. Similarly, there can be many other strategies which can be adopted by the duo, Alice and Bob. However, the key point is that $\mathcal{F}$ cannot have more than two values since Alice is allowed to send a cbit only to Bob. But through those two values of $\mathcal{F}$ it is not possible for Bob to extract three different values of $X$, which is necessary for unambiguous identification. Precisely, in this example, if $x_2$ or $x_3$ is sent to Alice, then She sends `1' to Bob. In such a situation if Bob receives $j = 2$ value from the Referee, he is not be able to identify the value of the random variable unambiguously with some nonzero probability.

We are now ready to present the above observation in a proposition form and we also provide a general proof of the proposition.

\begin{proposition}\label{prop1}
There exists no strategy via which the game can be won when Alice is allowed to send only a cbit to Bob.
\end{proposition}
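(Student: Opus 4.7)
The plan is to reduce Proposition~\ref{prop1} to an impossibility result about $2$-coloring the cycle graph $C_3$. First I would formalize the deterministic strategy: since no randomness is allowed, Alice's encoding is a function $c:\{x_1,x_2,x_3\}\to\{0,1\}$ and Bob's decoding is a function $g:\{1,2,3\}\times\{0,1\}\to\{x_1,x_2,x_3,\text{inconclusive}\}$. For any fixed admissible pair $(x_i,j)$ with $x_i\in S_j$, the output $g(j,c(x_i))$ is deterministic, so the probability of correctly identifying $x_i$ is either $0$ or $1$; the winning condition (nonzero identification probability for every admissible $(x_i,j)$) therefore forces $g(j,c(x_i))=x_i$ for every such pair.

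Next I would extract the combinatorial consequence. Fix any $S_j=\{x_a,x_b\}$. For the common index $j$, Bob must output $x_a$ when Alice received $x_a$ and $x_b$ when she received $x_b$; since $g$ is a function, this forces $c(x_a)\neq c(x_b)$. Applied to $S_1=\{x_1,x_2\}$, $S_2=\{x_2,x_3\}$, and $S_3=\{x_3,x_1\}$, this yields three inequality constraints that say precisely that $c$ is a proper $2$-coloring of the triangle with vertices $x_1,x_2,x_3$.

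The final step is a parity/pigeonhole argument: the chromatic number of an odd cycle is $3$, so no map $c$ into a $2$-element set can satisfy all three constraints simultaneously. Explicitly, fixing $c(x_1)=0$ forces $c(x_2)=1$ via $S_1$, then $c(x_3)=0$ via $S_2$, contradicting $c(x_3)\neq c(x_1)$ from $S_3$. Equivalently, by pigeonhole two of the three values $c(x_1),c(x_2),c(x_3)$ must coincide, and those two elements together form exactly one of the $S_j$'s, violating the corresponding constraint.

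The main obstacle I anticipate is interpretive rather than technical: I need to justify carefully that the determinism assumption collapses ``identification with nonzero probability'' into ``identification with certainty'', because without that collapse one might be tempted to allow Bob to trade off inconclusive outputs across different $j$'s. Once that reduction is articulated, the remainder of the proof is a one-line graph-coloring impossibility, and the same template will generalize to the higher-dimensional variants later in the paper by replacing the triangle with a hypergraph whose chromatic number exceeds the size of the classical message alphabet.
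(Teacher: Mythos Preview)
Your proposal is correct and follows essentially the same route as the paper: both arguments reduce to the observation that Alice's deterministic encoding $c:\{x_1,x_2,x_3\}\to\{0,1\}$ cannot assign distinct values to all three inputs, and the pair on which $c$ collides corresponds to a set $S_j$ for which Bob cannot identify the value. Your formalization via a proper $2$-coloring of $C_3$ is more precise than the paper's prose argument (and your explicit collapse of ``nonzero probability'' to ``certainty'' under determinism fills a step the paper leaves implicit), but the underlying pigeonhole idea is identical.
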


\begin{proof}
The encoding-decoding strategy should be fixed before the value of the random variable is sent to Alice. The values of the variable are equally probable. Thus, the information of which set is sent to Bob, is also completely random. Furthermore, this information is not known to Alice. Thus, `the information regarding the random set' does not help Alice to fix an encoding strategy. But this information may help Bob to choose the right decoding strategy. 

However, to fix an encoding-decoding strategy via which the game can be won, Alice has to compute a function and before the game starts Alice can inform about this function to Bob. This is like the example given in the above. But the function can be any function.

Since, the random variable is three dimensional, the function must output three different values corresponding to the values of the random variable. But when Alice is allowed to send only a cbit, Alice cannot encode three different values of a function within a cbit.

Next, we assume that the function does not output different values corresponding to the values of the random variable. In that case, there must be at least one situation when the value of the variable cannot be identified unambiguously. More precisely, it means that there exists at least one value of $j$ (where $j$ is associated with $S_j$), for which the unambiguous identification is not possible. These complete the proof.  
\end{proof}

\begin{proposition}\label{prop2}
There exists a strategy via which the game can be won when Alice sends a qubit to Bob, i.e., Bob is able to identify $x_i$ $\forall~j$ unambiguously with some nonzero probability.
\end{proposition}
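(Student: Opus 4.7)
The plan is to exhibit an explicit qubit strategy built from the trine ensemble, and to argue that the conditional side information $j$ reduces Bob's task at each round to distinguishing only two non-orthogonal but linearly independent qubit states, which admits an unambiguous POVM with nonzero success probability.

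First I would fix Alice's encoding. For each $i \in \{1,2,3\}$ she prepares the trine state
\begin{equation*}
|\psi_i\rangle = \cos\!\left(\tfrac{\pi(i-1)}{3}\right)|0\rangle + \sin\!\left(\tfrac{\pi(i-1)}{3}\right)|1\rangle,
\end{equation*}
so that any two of them satisfy $|\langle\psi_i|\psi_k\rangle| = 1/2$ for $i\neq k$. This encoding is agreed upon before the game starts, together with the three candidate decoding strategies that Bob will use depending on the message $j$ he receives from the Referee.

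Next I would specify Bob's decoding. Upon receiving $j$, Bob knows that the transmitted state lies in the two-element linearly independent set $\{|\psi_k\rangle : x_k \in S_j\}$. For any such pair, the standard Ivanovic–Dieks–Peres construction provides an unambiguous three-outcome POVM $\{E_k,E_{k'},E_?\}$, where $E_k$ clicks only on $|\psi_k\rangle$, $E_{k'}$ clicks only on $|\psi_{k'}\rangle$, and $E_?$ is the inconclusive outcome; the optimal conclusive probability equals $1-|\langle\psi_k|\psi_{k'}\rangle| = 1/2$ in our trine case. Bob applies the POVM associated with $S_j$ and outputs the corresponding value of $X$ whenever a conclusive outcome occurs.

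Finally I would verify that the winning condition holds for every $(i,j)$ with $x_i\in S_j$. By construction the POVM never misidentifies a state, so the answer is error-free; and since the pair of states involved is linearly independent, the probability of the correct conclusive outcome is strictly positive (in fact $1/2$). Hence for every value $x_i$ Alice may receive and every admissible $j$ the Referee may send, Bob identifies $x_i$ unambiguously with nonzero probability, proving the proposition.

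I do not anticipate a real obstacle here: the only substantive ingredient is the existence of unambiguous discrimination for two linearly independent pure states, which is classical. The main thing to get right is the bookkeeping: making sure that every pair $\{x_k,x_{k'}\}$ that can appear as $S_j$ corresponds to a linearly independent pair of trine states (which is immediate since no two trine vectors coincide or are orthogonal), so that all three conditional POVMs are well defined and simultaneously succeed with positive probability.
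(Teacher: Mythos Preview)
Your proof is correct and follows essentially the same approach as the paper: exhibit an explicit qubit encoding in which every pair of states is linearly independent, then invoke unambiguous discrimination of two pure states to guarantee nonzero conclusive probability for each $j$. The only difference is cosmetic---the paper uses the simpler encoding $x_1\to\ket{0}$, $x_2\to\ket{1}$, $x_3\to(\ket{0}+\ket{1})/\sqrt{2}$, whereas you use the trine ensemble, which the paper introduces only later (Section~II.B) when optimizing the minimum individual success probability.
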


\begin{proof}
To prove the above proposition, if an explicit strategy is provided via which the game can be won then it is sufficient. For example, Alice can avail herself of the following encoding strategy:
\begin{equation*}
\begin{array}{c}
x_1\rightarrow\ket{0},~~x_2\rightarrow\ket{1},\\[1 ex]
x_3\rightarrow\frac{1}{\sqrt{2}}\ket{0}+\frac{1}{\sqrt{2}}\ket{1},
\end{array}
\end{equation*} 
where the states $\ket{0}$ and $\ket{1}$ are orthogonal to each other.

Now, if any two values of $X$ are chosen, then corresponding states are linearly independent states which can be unambiguously distinguished with some nonzero probability \cite{Chefles98}. In this way, for any value of $j$ (where $j$ is associated with $S_j$), Bob is able to identify the value of the variable unambiguously with at least some nonzero probability and thus, the game can be won. These complete the proof. 
\end{proof}

From Proposition \ref{prop1} and \ref{prop2}, it is clear that in the context of the present game a qubit can provide advantage over a cbit. Again, the advantage is coming from superposition. Also notice that we compare between deterministic strategies only, since, we do not allow the parties to use any type of randomness here. More precisely, in our case when Alice sends a cbit, she sends either `0' or `1' and when Alice sends a qubit, she actually sends a pure qubit state. No additional correlation (local or global) is used by Alice and Bob. However, in the following, we explore about the advantage of qubit communication in details. 

\subsection{Bounds on the success probabilities}
Before we proceed, we mention that we are going to talk about two types of success probabilities: {\it individual probability} and {\it average probability}. But before we provide these definitions it is important to say the following. Here we consider a random variable which has different values. Now a value of the variable is sent to Alice, at the same time Referee also sends some partial information regarding that value to Bob. This partial information is defined by different $j$ values, where $j$ is associated with $S_j$. Corresponding to each $j$ value, we define `{\it an event}'. Based on these events, we now provide the definitions of two types of success probabilities.

\begin{definition}\label{}
[Individual probability of success] It is the probability of successfully identifying the value of the random variable, which is sent to Alice, in each event.
\end{definition}

According to the winning condition of the game, the individual probabilities of success must be non-zero.

\begin{definition}\label{}
[Average probability of success] We first take the sum of all individual probabilities. Then, we divide that sum by the total number of events. Thus, we get the average probability of success. 
\end{definition}

As mentioned, an event is defined by a `$j$' value. If $p_j$ is the probability of success corresponding to a `$j$' value. Then, $p_j$s are individual probabilities of success. Furthermore, average probability of success $\mathcal{P}_{avg}^{(d)}$, when $\dim X = d$, is defined as:

\begin{center}
$\mathcal{P}_{avg}^{(d)}$ = $\frac{1}{|j|}\sum_j p_j$,
\end{center}
where $|j|$ is the total number of events, given by $\binom{d}{n}$.

\begin{proposition}\label{prop3}
The average probability of success can be maximized by sending only a cbit from Alice to Bob but in this scenario, the goal of the present game cannot be achieved.
\end{proposition}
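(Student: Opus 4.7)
The plan is to prove the proposition in three parts. First, I will exhibit an explicit cbit strategy and compute its average. Second, I will show that no qubit strategy can give a strictly larger average, so cbit is globally optimal for the average. Third, I will invoke Proposition~\ref{prop1} to conclude that this optimal cbit strategy fails to satisfy the winning condition of the game.

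For the first part, take the encoding $f(x_1)=f(x_2)=0$ and $f(x_3)=1$. In event $j=1$ (with $S_1=\{x_1,x_2\}$) both admissible values map to the same bit, so Bob receives no usable information and $p_1=0$. In events $j=2$ and $j=3$ the two elements of $S_j$ map to different bits, so Bob can identify $x_i$ with certainty, giving $p_2=p_3=1$. Hence $\mathcal{P}_{avg}^{(3)}=(0+1+1)/3=2/3$, and a trivial symmetry argument shows that every non-constant cbit encoding attains the same value (the constant encodings give $\mathcal{P}_{avg}^{(3)}=0$).

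For the second part, suppose Alice uses a qubit encoding $x_i\mapsto|\psi_i\rangle\in\mathbb{C}^2$. In each event $j$, Bob is reduced to unambiguous discrimination between two equiprobable pure states with overlap $s_j:=|\langle\psi_a|\psi_b\rangle|$ where $\{x_a,x_b\}=S_j$. By the optimal unambiguous discrimination bound used in Proposition~\ref{prop2}, the best Bob can do is $p_j=1-s_j$, so
\[ \mathcal{P}_{avg}^{(3)}=1-\tfrac{1}{3}(s_{12}+s_{13}+s_{23}). \]
The proof therefore reduces to the geometric inequality $s_{12}+s_{13}+s_{23}\geq 1$ for any three unit vectors in $\mathbb{C}^2$. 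I would establish this by noting that the $3\times 3$ Gram matrix of three vectors in a two-dimensional space has rank at most two, so its determinant vanishes; writing $a,b,c$ for the three overlaps and $\phi$ for the Bargmann phase of $\langle\psi_1|\psi_2\rangle\langle\psi_2|\psi_3\rangle\langle\psi_3|\psi_1\rangle$, this gives $a^2+b^2+c^2=1+2abc\cos\phi$. Combining $\cos\phi\geq-1$ with the elementary bound $ab+bc+ca\geq 3abc$ (valid for $a,b,c\in[0,1]$, since each factor is at most $1$), an expansion of $(a+b+c)^2$ yields $(a+b+c)^2\geq 1+4abc\geq 1$, as required.

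Combining the two parts, $2/3$ is the global maximum of $\mathcal{P}_{avg}^{(3)}$ and is attained by the cbit strategy constructed above. Finally, that strategy has $p_1=0$, and Proposition~\ref{prop1} independently guarantees that any cbit strategy must have at least one vanishing individual probability, so the winning condition of the game fails in spite of the average being optimal. The main obstacle I anticipate is the Gram-matrix inequality $s_{12}+s_{13}+s_{23}\geq 1$; the remainder is either construction or straightforward bookkeeping. A more geometric route would be to absorb global phases so that the three inner products are real and nonnegative and then reason on the Bloch circle, but the Gram-determinant approach has the advantage of handling the complex case uniformly.
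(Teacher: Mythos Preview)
Your argument is correct, and it reaches the same key inequality $s_{12}+s_{13}+s_{23}\ge 1$ that the paper establishes, but by a genuinely different route. The paper proceeds by choosing coordinates: it writes $\ket{\phi_1}=\ket{\phi}$, $\ket{\phi_2}=a\ket{\phi}+b\ket{\phi^\perp}$, $\ket{\phi_3}=a_1\ket{\phi_1^\perp}+a_2\ket{\phi_2^\perp}$, computes the overlap sum explicitly as $|a|+|b|(|a_1|+|a_2|)$, and then uses the normalization of $\ket{\phi_3}$ to show $|a_1|+|a_2|\ge 1$, whence the sum is at least $|a|+|b|\ge 1$. Your Gram-determinant approach is coordinate-free and shorter; the constraint $\det G=0$ (rank $\le 2$ in $\mathbb{C}^2$) gives $a^2+b^2+c^2=1+2abc\cos\phi$, and your chain $(a+b+c)^2\ge 1-2abc+2(ab+bc+ca)\ge 1+4abc\ge 1$ is clean. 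The payoff of the paper's parametrized approach is that it also identifies the equality cases explicitly (two states equal, the third orthogonal), and the intermediate bound $|a|+|b|$ is reused in later sections (Propositions~\ref{prop4} and~\ref{prop5}); your approach would need a separate equality analysis to recover that structure, but for the present proposition it is not needed.

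Two minor remarks. First, the optimal two-state unambiguous bound $p_j=1-s_j$ you invoke is the Ivanovic--Dieks--Peres result for equal priors, not something stated in Proposition~\ref{prop2}; the paper cites it at the start of its own proof, so you should point to those references rather than to Proposition~\ref{prop2}. Second, the proposition can also be read as asserting that \emph{any} strategy attaining $2/3$ fails to win; the paper covers this by its equality analysis. Your version (cbit attains the max; any cbit strategy fails by Proposition~\ref{prop1}) suffices for the stated claim, but if you want to match the paper's slightly stronger conclusion you should note that equality in $(a+b+c)^2\ge 1+4abc$ forces $abc=0$, i.e.\ some $s_{ij}=0$, and then $a^2+b^2+c^2=1$ with one overlap zero forces another overlap equal to $1$, collapsing to the cbit case.
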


\begin{proof}
The values of the random variable are equally probable. Thus, the information regarding any set can be sent to Bob. These sets are also equally probable. The values of the random variable of a particular set must be associated with linearly independent states. This is to ensure the unambiguous discrimination \cite{Chefles98} for achieving the present goal. In an unambiguous discrimination of two pure states, the probability of inconclusive outcome depends on the overlap of the states \cite{Ivanovic87, Dieks88, Peres88}. So, the average probability of success, denoted by $\mathcal{P}_{avg}^{(3)}$, is given as the following: 
\begin{equation}\label{eq1}
\begin{array}{c}
\mathcal{P}_{avg}^{(3)} = 1 - \frac{1}{3} [|\langle\phi_1|\phi_2\rangle| + |\langle\phi_2|\phi_3\rangle| + |\langle\phi_3|\phi_1\rangle|].
\end{array}
\end{equation}
Here the superscript `(3)' stands for the fact that the dimension of $X$ is 3. The states $\ket{\phi_i}$, $\forall i =1,2,3$, are the states which are used for the encoding strategy by Alice. They are defined as the following. We assume that the values of the random variable is mapped against these states: $x_1\rightarrow\ket{\phi_1}$, $x_2\rightarrow\ket{\phi_2}$, $x_3\rightarrow\ket{\phi_3}$, where $\ket{\phi_3} = a_1\ket{\phi_1^\perp}+a_2\ket{\phi_2^\perp}$. The states $\ket{\phi_1}$ and $\ket{\phi_2}$ must be linearly independent. The coefficients $a_1$ and $a_2$ are some complex numbers such that $\ket{\phi_3}$ is a valid quantum state. Again, we take the values of $|a_1|, |a_2|$ as nonzero. It is quite clear now that the average probability of inconclusive outcome in the present case is dependent on $[|\langle\phi_1|\phi_2\rangle|+|\langle\phi_2|\phi_3\rangle|+|\langle\phi_3|\phi_1\rangle|]$. So, to increase $\mathcal{P}_{avg}^{(3)}$, we have to decrease the average probability of inconclusive outcome. For this purpose, we consider the following:
\begin{equation}\label{eq2}
\begin{array}{l}
\ket{\phi_1} = \ket{\phi},~~
\ket{\phi_2} = a\ket{\phi}+b\ket{\phi^\perp},\\[1 ex]
\ket{\phi_3} = a_1\ket{\phi^\perp}+a_2(b^\ast\ket{\phi}-a^\ast\ket{\phi^\perp}).
\end{array}
\end{equation}
$\ket{\phi_1}$ and $\ket{\phi_2}$ are linearly independent, $\langle\phi|\phi^\perp\rangle = 0$ and $|a|^2+|b|^2=1$. $a^\ast$ and $b^\ast$ are complex conjugate of the complex numbers $a$ and $b$. We can rewrite $\ket{\phi_3}$ as $a_2b^\ast\ket{\phi} + (a_1-a_2a^\ast)\ket{\phi^\perp}$, where $|a_2b^\ast|^2 + |(a_1-a_2a^\ast)|^2 = 1$. We next want to calculate the lower bound of the quantity, $[|\langle\phi_1|\phi_2\rangle|+|\langle\phi_2|\phi_3\rangle|+|\langle\phi_3|\phi_1\rangle|]$, which can be rewritten as the following: 
\begin{equation}\label{eq3}
\begin{array}{c}
|\langle\phi_1|\phi_2\rangle|+|\langle\phi_2|\phi_3\rangle|+|\langle\phi_3|\phi_1\rangle| 
= |a| + |b|(|a_1| + |a_2|).
\end{array}
\end{equation}
Since, $|a_1|+|a_2|$ cannot be zero, putting $|b|=0$, we minimize the quantity $|b|(|a_1|+|a_2|)$. This implies that $[|\langle\phi_1|\phi_2\rangle|+|\langle\phi_2|\phi_3\rangle|+|\langle\phi_3|\phi_1\rangle|] = 1$ and $\mathcal{P}^{(3)}_{avg} = \frac{2}{3}$. In the following, we prove that this is the maximum value of $\mathcal{P}^{(3)}_{avg}$. But if $|b|=0$, then $\ket{\phi_1}$ and $\ket{\phi_2}$ become linearly dependent and this is not good when one wants to achieve the goal of the present game. Therefore, we want to check when $|b|\neq0$, how to reduce the value of the quantity of (\ref{eq3}), i.e., we want to check if there is any way of maximizing the quantity of (\ref{eq1}) along with winning the game.

If $|b|\neq0$ then to reduce the value of the quantity of (\ref{eq3}), we have to reduce the quantity $|a_1|+|a_2|$. It is easy to show that $(|a_1|+|a_2|)\geq1$. For this purpose, we derive the following: 
\begin{equation}\label{eq4}
\begin{array}{l}
~~~~~|a_2b^\ast|^2 + |(a_1-a_2a^\ast)|^2 = 1\\[1 ex]

\Rightarrow |a_2|^2|b|^2 + (a_1^\ast - a_2^\ast a)(a_1-a_2a^\ast) = 1\\[1 ex]

\Rightarrow |a_1|^2 + |a_2|^2 - (a_1^\ast a_2a^\ast + a_1a_2^\ast a) = 1\\[1 ex]

\Rightarrow |a_1| + |a_2| = \sqrt{[1+2|a_1||a_2|+(a_1^\ast a_2a^\ast + a_1a_2^\ast a)]} \\[1 ex]

= \sqrt{[1+2|a_1||a_2| \{1+|a|\cos{(\theta_1+\theta-\theta_2)}\}]},
\end{array}
\end{equation}
where $a = |a|e^{\mathbf{i}\theta}$ and $a_i = |a_i|e^{\mathbf{i}\theta_i}$, $\forall i = 1,2$, $\mathbf{i} = \sqrt{-1}$. From the above, it is clear that $(|a_1| + |a_2|) \geq 1$ and $(|a_1| + |a_2|) = 1$ if and only if one of the following conditions is satisfied: $|a_1| = 0$, $|a_2| = 0$, or, $|a| = 1$ along with $\cos{(\theta_1+\theta-\theta_2)} = -1$. By putting $(|a_1| + |a_2|) \geq 1$, we get the following lower bound:
\begin{equation}\label{eq5}
[|\langle\phi_1|\phi_2\rangle|+|\langle\phi_2|\phi_3\rangle|+|\langle\phi_3|\phi_1\rangle|] \geq (|a|+|b|). 
\end{equation}
We further can think about minimizing the quantity $|a|+|b|$ which is 1 if and only if either of the conditions is satisfied: $|a| = 0$ or $|b| = 0$. Finally, we consider all the possibilities together for increasing the value of $\mathcal{P}_{avg}^{(3)}$. But we see that the maximum value of this quantity is $\frac{2}{3}$. This is achievable if and only if one of following conditions is satisfied:
\begin{equation}\label{eq6}
\begin{array}{ccc}
(1) & |a| = 1,~|b| = 0, & |(a_1-a_2a^\ast)|^2 = 1\\[1 ex]
(2) & |b| = 1,~|a| = 0, & ~|a_1| = 1,~|a_2| = 0\\[1 ex]
(3) & |b| = 1,~|a| = 0, & ~|a_1| = 0,~|a_2| = 1
\end{array}
\end{equation}
It is easy to check that for each of the above conditions, two of the states of (\ref{eq2}) are going to be the same state and the other state is orthogonal to that state. Such an encoding can be communicated, for sure, from Alice's side to Bob's side by sending a cbit only, the same states can correspond to 0 while the orthogonal state can correspond to 1 or, vice-versa. However, for such an encoding, for at least one value of $j$ ($j$ is associated with $S_j$), Bob will not be able to identify the value of $X$ unambiguously with some nonzero probability. Thus, the goal of the present game cannot be achieved. These complete the proof.
\end{proof}

Notice that to win the game, it is quite justified to start with the states of (\ref{eq2}) because if we choose any two states from these three states, then the two states are going to be linearly independent for sure when we take the coefficients $|a|, |b|, |a_1|, |a_2|$ as nonzero. In particular, we have taken here the states $\ket{\phi_1}$ and $\ket{\phi_2}$ as linearly independent. So, like $\{\ket{\phi_1}, \ket{\phi_2}\}$, $\{\ket{\phi_1^\perp}, \ket{\phi_2^\perp}\}$ are also linearly independent and they form a basis for two dimensional Hilbert space. Thus, $\ket{\phi_3}$ can be written as a linear combination of $\ket{\phi_1^\perp}$ and $\ket{\phi_2^\perp}$. This is what we have considered. However, we end up with the fact that if we want to achieve the maximum value of the average probability of success, then the game cannot be won. Again, there is no quantum advantage in maximizing the average probability of success as this maximum value can be achieved when Alice is sending a classical bit to Bob. In fact, from the proof of Proposition \ref{prop3}, it is clear that the maximum value of $\mathcal{P}_{avg}^{(3)}$ cannot be achieved when states of a quantum encoding strategy are pairwise linearly independent. Nevertheless, we are interested in the following: Bob unambiguously identifies the value of $X$ all the time (for all `$j$' values, where $j$ is associated with $S_j$), i.e., in every event with some nonzero probability. Furthermore, we search for the maximum value of $\mathcal{P}_{avg}^{(3)}$ when Alice and Bob win the game. From the preceding proposition, we can conclude that for winning the game $|b|$ must be nonzero. In that case we can start with the lower bound, given in (\ref{eq5}). In fact, we argue that this lower bound is achievable {\it if and only if} $(|a_1| + |a_2|) = 1$. We now put this if and only if condition in a proposition form.

\begin{proposition}\label{prop4}
The lower bound of (\ref{eq5}) is achievable if and only if $|a_1| + |a_2| = 1$, provided $|b|\neq0$.
\end{proposition}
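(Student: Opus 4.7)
My approach is to compare the exact expression for the sum of overlaps derived in equation (\ref{eq3}) with the lower bound appearing in (\ref{eq5}), and then read off the equality condition algebraically. Equation (\ref{eq3}) gives the sum of overlaps as $|a|+|b|(|a_1|+|a_2|)$, while the bound in (\ref{eq5}) is $|a|+|b|$. The difference between these two quantities is precisely $|b|\bigl[(|a_1|+|a_2|)-1\bigr]$, and the proposition reduces to asking when this gap vanishes.

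First I would record the identity above as a one-line consequence of (\ref{eq3}). Then, for the ``only if'' direction, I would suppose that the lower bound in (\ref{eq5}) is achieved, set the gap to zero, and invoke the hypothesis $|b|\neq 0$ to cancel that factor, yielding $|a_1|+|a_2|=1$. For the ``if'' direction, I would simply substitute $|a_1|+|a_2|=1$ back into (\ref{eq3}) and recover $|a|+|b|$, matching the bound in (\ref{eq5}). Both implications together give the stated biconditional.

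I do not expect any serious obstacle, since the proposition is essentially an algebraic rearrangement of identities already established in the proof of Proposition~\ref{prop3}. The only subtlety worth explicitly acknowledging is consistency with the normalisation of $\ket{\phi_3}$: under $|b|\neq 0$, the saturation analysis following (\ref{eq4}) restricts the equality $|a_1|+|a_2|=1$ to the subcases $|a_1|=0$ or $|a_2|=0$ (the third option $|a|=1$ is ruled out because $|a|^2+|b|^2=1$ would then force $|b|=0$). Both admissible subcases are compatible with the constraint $|a_2 b^\ast|^2+|a_1-a_2 a^\ast|^2=1$ built into the ansatz, so the bound is genuinely realisable by valid encoding states, confirming that ``achievable'' is meant in a non-vacuous sense.
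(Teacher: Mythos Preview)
Your proposal is correct and follows essentially the same route as the paper: both arguments compare the exact expression $|a|+|b|(|a_1|+|a_2|)$ from (\ref{eq3}) with the bound $|a|+|b|$ in (\ref{eq5}), factor the difference as $|b|\bigl[(|a_1|+|a_2|)-1\bigr]$, and use $|b|\neq 0$ to conclude. The only minor distinction is that you fold the consistency discussion about the subcases $|a_1|=0$ or $|a_2|=0$ into the proof, whereas the paper records that analysis separately in the paragraph following the proposition.
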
 

\begin{proof}
The `if' part is already shown in the proof of the preceding proposition, in particular, see (\ref{eq3}). Thus, for the `only if' part, we consider the following: 
\begin{equation}\label{eq7}
\begin{array}{c}
|a| + |b|(|a_1| + |a_2|) = |a| + |b|\\[1 ex]
\Rightarrow |b|(|a_1| + |a_2|-1) = 0.
\end{array}
\end{equation}
We have already mentioned that $|b|\neq0$, so, the only possibility is $(|a_1| + |a_2|-1) = 0$, i.e., $|a_1| + |a_2| = 1$, to satisfy the bound. These complete the proof. 
\end{proof}

We now want two things together: (a) winning the game and (b) achieving the lower bound of (\ref{eq5}). We set (a) here because it is the main goal. On the other hand, (b) helps to reduce the probability of inconclusive outcome. These together can be expressed through the following manner:
\begin{equation}\label{eq8}
\begin{array}{l}
\mbox{We want:}~~\epsilon >0,\\[1 ex]
\mbox{Such that:}~ (1-|\langle\phi_i|\phi_{i^\prime}\rangle|)\geq\epsilon,\\[1 ex] \forall i, i^\prime = 1,2,3,~ i\neq i^\prime\\[1 ex]
\mbox{and}~~~~~~~~ \\[1 ex]
|\langle\phi_1|\phi_2\rangle|+|\langle\phi_2|\phi_3\rangle|+|\langle\phi_3|\phi_1\rangle| = |a|+|b|.
\end{array}
\end{equation}
To solve the above, we can start with $|a_1| + |a_2| = 1$. This is to make $|\langle\phi_1|\phi_2\rangle|+|\langle\phi_2|\phi_3\rangle|+|\langle\phi_3|\phi_1\rangle| = |a|+|b|$. Now, $|a_1| + |a_2| = 1$ only when $a_1 = 0$ or $a_2 = 0$ because $|a|$ cannot be 1, see (\ref{eq4}) for details. Either of the conditions $a_1 = 0$ or $a_2 = 0$, provides us similar type of solution so, without loss of generality we can take $a_2 = 0$. Therefore, the states of (\ref{eq2}) become: 
\begin{equation}\label{eq9}
\begin{array}{c}
\ket{\phi_1} = \ket{\phi},~
\ket{\phi_2} = a\ket{\phi}+b\ket{\phi^\perp},~
\ket{\phi_3} = \ket{\phi^\perp}.
\end{array}
\end{equation}
We take $|a|\geq |b|$. So, for each values of `$j$' ($j$ is associated with $S_j$), the individual probabilities of success are 1, $1-|a|$, and $1-|b|$. Among these three probabilities, the minimum value is $1-|a|$. Thus, we take $1-|a|= \epsilon$. In this way, we get a solution of (\ref{eq8}), given by-
\begin{equation*}
\begin{array}{c}
\ket{\phi_1} = \ket{\phi},~~
\ket{\phi_2} = (1-\epsilon)\ket{\phi}+\sqrt{2\epsilon-\epsilon^2}\ket{\phi^\perp},\\[1 ex]
\ket{\phi_3} = \ket{\phi^\perp}.
\end{array}
\end{equation*}
In this case,
\begin{equation*}
\begin{array}{c}
|\langle\phi_1|\phi_2\rangle|+|\langle\phi_2|\phi_3\rangle|+|\langle\phi_3|\phi_1\rangle| = 1-\epsilon+\sqrt{2\epsilon-\epsilon^2},\\[1 ex]
\mbox{and}\\[1 ex]
\mathcal{P}_{avg}^{(3)} = \frac{1}{3}(2 + \epsilon - \sqrt{2\epsilon-\epsilon^2}).
\end{array}
\end{equation*} 
Let us understand the meaning of this solution with one example. 

{\it Example 1.}~We assume that $\epsilon = 0.1$. So, if we consider the encoding through the states $\ket{\phi_1} = \ket{\phi}$, $\ket{\phi_2} = 0.9\ket{\phi}+\sqrt{0.19}\ket{\phi^\perp}$, $\ket{\phi_3} = \ket{\phi^\perp}$, then $(1-|\langle\phi_i|\phi_{i^\prime}\rangle|)\geq0.1$ and $\mathcal{P}_{avg}^{(3)} = 0.5547$ (approx.). In fact, there is no solution for which $(1-|\langle\phi_i|\phi_{i^\prime}\rangle|)\geq0.1$ and at the same time, $\mathcal{P}_{avg}^{(3)} > 0.5547$. 

{\it Example 2.}~If we think about maximizing $\epsilon$ within the problem of (\ref{eq8}), then the only possibility is that we take $|a|=|b|=1/\sqrt{2}$. This is a special case of the problem of (\ref{eq8}). The maximum value of $\epsilon$ is given by- $1-(1/\sqrt{2})$, achievable through the states $\ket{\phi_1} = \ket{\phi}$, $\ket{\phi_2} = (\ket{\phi}+\ket{\phi^\perp})/\sqrt{2}$, $\ket{\phi_3} = \ket{\phi^\perp}$, $|\langle\phi_1|\phi_2\rangle|+|\langle\phi_2|\phi_3\rangle|+|\langle\phi_3|\phi_1\rangle| = \sqrt{2}$, and $\mathcal{P}_{avg}^{(3)} = 1-(\sqrt{2}/3)$. Notice that $|\langle\phi_1|\phi_2\rangle|+|\langle\phi_2|\phi_3\rangle|+|\langle\phi_3|\phi_1\rangle| = \sqrt{2}$ is the {\it greatest lower bound} and $\ket{\phi_2}$ is now a maximally coherent state\footnote{For details regarding quantum coherence, one can have a look into \cite{Streltsov17} and the references therein. However, we mention that all states of the form $\mu_0\ket{\mu_0^\prime}+\mu_1\ket{\mu_1^\prime}$ are coherent states with respect to the basis $\{\ket{\mu_0^\prime}, \ket{\mu_1^\prime}\}$, $|\mu_0|,|\mu_1|>0$. Now, suppose, this basis is an orthonormal basis and $\mu_0 = \mu_1 = 1/\sqrt{2}$, then the superposed states, just mentioned, are maximally coherent states with respect to the basis $\{\ket{\mu_0^\prime}, \ket{\mu_1^\prime}\}$.} with respect to $\{\ket{\phi}, \ket{\phi^\perp}\}$ basis.

\subsection{Role of the trine ensemble}
Having optimized the average probability of success, we now consider optimizing the individual probabilities of success. In the case of the problem of (\ref{eq8}), this means that we drop the constraint $|\langle\phi_1|\phi_2\rangle|+|\langle\phi_2|\phi_3\rangle|+|\langle\phi_3|\phi_1\rangle| = |a|+|b|$. Then, some of the individual probabilities of success might be improved. Alice and Bob can fix the following encoding process.
\begin{equation}\label{eq10}
\begin{array}{l}
x_1 \rightarrow \ket{\phi_1} = \ket{0},\\[1 ex]
x_2 \rightarrow \ket{\phi_2} = \frac{1}{2}(\ket{0}+\sqrt{3}\ket{1}),\\[1 ex]
x_3 \rightarrow \ket{\phi_3} = \frac{1}{2}(\ket{0}-\sqrt{3}\ket{1}).
\end{array}
\end{equation}
Accordingly, the measurement, which is performed by Bob to decode the information, is given by the POVM elements $\Pi_i$ = $\frac{2}{3}\ketbra{\tilde{\phi_i}}{\tilde{\phi_i}}$, where $\ket{\tilde{\phi_i}}$ is orthogonal to $\ket{\phi_i}$, defined in the above equation $\forall i = 1,2,3$. In this case, unambiguous discrimination is accomplished by elimination of a state. Here individual probability of success is 0.5 and the average probability of success is also 0.5. Clearly, in this case some of the individual probabilities of success are improved but the average probability of success is decreased. Then, the question of interest is: What is the optimal strategy such that the success probability in any individual case cannot be less than a maximal value?

\begin{equation*}
\begin{array}{l}
\mbox{We want:}~~\epsilon_{max} >0,\\[1 ex]
\mbox{Such that:}~ (1-|\langle\phi_i|\phi_{i^\prime}\rangle|)\geq\epsilon_{max},\\[1 ex] \forall i, i^\prime = 1,2,3,~ i\neq i^\prime,
\end{array}
\end{equation*}
where $\epsilon_{max}$ is the maximum probability at least achievable in the individual cases. To figure it out, we consider a specific encoding $x_i\rightarrow\ket{\phi_i}$, $i=1,2,3$. 
\begin{equation*}
\ket{\phi_1} = \ket{0},~\ket{\phi_2} = a\ket{0}+b\ket{1},~\ket{\phi_3} = a\ket{0}-b\ket{1},
\end{equation*}
where $a,b$ are complex numbers such that $|a|^2+|b|^2=1$. Notice that if we pick any two states from these three states then the picked states are linearly independent and thus, they can be distinguished unambiguously. One may think, for both states $\ket{\phi_2}$ and $\ket{\phi_3}$, why we consider same coefficients $a,b$. This is due to the following reason. Suppose, we take different coefficients for $\ket{\phi_3}$. Then, it may possible to reduce the overlap between $\ket{\phi_1}$ and $\ket{\phi_3}$ in comparison with the overlap between $\ket{\phi_1}$ and $\ket{\phi_2}$. But in that case the overlap between $\ket{\phi_2}$ and $\ket{\phi_3}$ may increase. On the other hand, if we take same coefficients then keeping the overlaps between the pairs $\{\ket{\phi_1}, \ket{\phi_2}\}$ and $\{\ket{\phi_1}, \ket{\phi_3}\}$ same, one can reduce the overlap between $\ket{\phi_2}$ and $\ket{\phi_3}$. This is extremely important to solve the above problem (finding the value of $\epsilon_{max}$). However, we do not consider $\ket{\phi_2}$ and $\ket{\phi_3}$ orthogonal to each other, as this case is already been discussed (see Example 2 of the previous subsection in this regard). Let us now calculate the probabilities of success for each pair of states.
\begin{equation*}
\begin{array}{lll}
j=1 \implies & p_{\{\ket{\phi_1}, \ket{\phi_2}\}} \implies & 1-|\langle\phi_1|\phi_2\rangle| = 1-|a|,\\[1 ex]
j=2 \implies & p_{\{\ket{\phi_2}, \ket{\phi_3}\}} \implies & 1-|\langle\phi_2|\phi_3\rangle| = 1-|(|a|^2-|b|^2)|\\[1 ex]
j=3 \implies & p_{\{\ket{\phi_3}, \ket{\phi_1}\}} \implies & 1-|\langle\phi_3|\phi_1\rangle| = 1-|a|
\end{array}
\end{equation*}
In this case, we assume $|a|<|b|$. So, $0<|a|^2<1/2$ and $1/2<|b|^2<1$. If $|a|^2=|b|^2=1/2$, then it is similar as Example 2. We now consider a small positive number $0<\delta<1/2$, such that $|a|^2 = 1/2-\delta$ and $|b|^2 = 1/2 + \delta$. So, the success probabilities, corresponding to $j=1,2,3$, become $1-\sqrt{1/2-\delta}$, $1-2\delta$, and $1-\sqrt{1/2-\delta}$. We then assume $2\delta\geq\sqrt{1/2-\delta}$ $\implies$ $\delta\geq1/4$. Clearly, $\epsilon_{max}=0.5$ when $\delta=1/4$. Similarly, if we take $2\delta\leq\sqrt{1/2-\delta}$, then, we get $\delta\leq1/4$. In this case also $\epsilon_{max}=0.5$ when $\delta=1/4$. These suggest the maximum achievable value of $\epsilon$, i.e., $\epsilon_{max}$ is 0.5 and this is happening when $\delta = 1/4$, thereby, the values of $|a|^2$ and $|b|^2$ are 1/4 and 3/4. Thus, we get trine ensemble as the optimal solution when we only focus on improving the individual probabilities. One can also check by considering $|a|>|b|$ but no better value of $\epsilon$ can be obtained with this consideration.

From the above discussion, it is clear that finding the value of $\epsilon_{max}$ is connected with finding the point where all of the individual probabilities are equal. So, in other words, if we set the overlap of $\ket{\phi_1}$ and $\ket{\phi_2}$ as $|a|$, then clearly the overlap of $\ket{\phi_1}$ and $\ket{\phi_3}$ also needs to be $|a|$. This gives the construction of the encoding presented here, and rest is to find the point where $1 - |a| = 1 - |(|a|^2 - |b|^2)|$, which occurs when $|a| = 1/2$ and $|b| = \sqrt{3}/2$. This solution gives the states of trine ensemble (for details regarding the trine ensemble see Ref.~\cite{Barnett09} and the references therein).

Furthermore, notice that in case of a quantum strategy like Proposition \ref{prop2}, for which the game can be won, Bob can change his measurement according to the information of the set that he receives from Referee. For example, if Bob receives ``1'', then Bob distinguishes between $\ket{\phi_1}$ and $\ket{\phi_2}$. Again, if Bob is given ``2'', then Bob distinguishes between $\ket{\phi_2}$ and $\ket{\phi_3}$. In each case, Bob can choose a measurement defined by suitable positive operator valued measure (POVM) elements to achieve the optimal probability. Clearly, if the following constraint is put on Bob that `he cannot change his measurement', i.e., he is allowed to perform only one measurement, then the situation is more complex. Here the motivation of fixing this constraint can be described as the following. Actually, in our case, a measurement corresponds to a specific setup. So, with increasing the number of required measurements, the number of required setups also increases. This is certainly a costly affair. Therefore, it is reasonable to consider the constraint that `Bob is not able to change his measurement'. In fact, it demonstrates less resource requirement in a practical situation as Bob is allowed to use only one measurement setup. However, for the simplest case (i.e., $d=3$) which is described above, we can have a solution through the so-called `trine' ensemble.

\section{Four dimensional random variable}\label{sec3}
In this section, we consider that the dimension of the random variable is four, i.e., $d=4$ but still we consider that $n=2$. So, in this case there could be six random sets, given by - $S_1 = \{x_1 , x_2\}$, $S_2 = \{x_1 , x_3\}$, $S_3 = \{x_1 , x_4\}$, $S_4 = \{x_2 , x_3\}$, $S_5 = \{x_2 , x_4\}$, $S_6 = \{x_3 , x_4\}$. Ultimately, Bob receives a `$j$' value (`$j$' can be $1,2,\dots,6$) from the Referee and try to distinguish between two states, where the encoding of the quantum scenario is like $x_i\rightarrow\ket{\phi_i},~\forall i = 1,2,3,4$. Overall, here our communication task goes on in the same way as it is described in the previous section.

In this case also if Alice sends only a cbit then the game cannot be won. This is obvious because when $d=3$ by sending a cbit, the game cannot be won, then when $d$ is increased but $n$ is the same, the complexity of the game is also increased. Therefore, it is obvious that the game cannot be won. However, we will show that by sending a qubit, the game can be won. We define $\mathcal{P}_{avg}^{(4)}$ as the average probability of success when $d=4$. $\mathcal{P}_{avg}^{(4)}$ is given by-
\begin{equation}\label{eq11}
\begin{array}{c}
\mathcal{P}_{avg}^{(4)} = 1-\frac{1}{6}[|\langle\phi_1|\phi_2\rangle|+|\langle\phi_1|\phi_3\rangle|+|\langle\phi_1|\phi_4\rangle|\\[1 ex]+|\langle\phi_2|\phi_3\rangle|+|\langle\phi_2|\phi_4\rangle|+|\langle\phi_3|\phi_4\rangle|]
\end{array}
\end{equation}
We rewrite this equation in the following way:
\begin{equation}\label{eq12}
\begin{array}{c}
\mathcal{P}_{avg}^{(4)} = 1-\frac{1}{12}[(|\langle\phi_1|\phi_2\rangle|+|\langle\phi_2|\phi_3\rangle|+|\langle\phi_1|\phi_3\rangle|)\\[1 ex]+(|\langle\phi_2|\phi_3\rangle|+|\langle\phi_2|\phi_4\rangle|+|\langle\phi_3|\phi_4\rangle|)\\[1 ex]+(|\langle\phi_1|\phi_2\rangle|+|\langle\phi_1|\phi_4\rangle|+|\langle\phi_2|\phi_4\rangle|)\\[1 ex]+(|\langle\phi_1|\phi_3\rangle|+|\langle\phi_1|\phi_4\rangle|+|\langle\phi_3|\phi_4\rangle|)]\\[2 ex]
\Rightarrow \mathcal{P}_{avg}^{(4)} = \frac{1}{4}[(\mathcal{P}_{avg}^{(3)})_{123}+(\mathcal{P}_{avg}^{(3)})_{234}+(\mathcal{P}_{avg}^{(3)})_{124}\\[1 ex]+(\mathcal{P}_{avg}^{(3)})_{134}],
\end{array}
\end{equation}
where $(\mathcal{P}_{avg}^{(3)})_{klm}$ stands for the average probability of success, provided there are three values of the random variable available $x_k$, $x_l$, and $x_m$, $k\neq l \neq m$. Clearly, $\mathcal{P}_{avg}^{(4)}$ is maximum when individual $(\mathcal{P}_{avg}^{(3)})_{klm}$ are maximum. From the previous section, it is known that $(\mathcal{P}_{avg}^{(3)})_{klm}\leq \frac{2}{3}$ and the equality holds when we apply an encoding like: $x_k\rightarrow0$, $x_l\rightarrow0$, and $x_m\rightarrow1$. Therefore, it must be the case that $\mathcal{P}_{avg}^{(4)}\leq\frac{2}{3}$ but the question is if an encoding exists for which $\mathcal{P}_{avg}^{(4)}=\frac{2}{3}$. Such an encoding is given by $x_1\rightarrow0$, $x_2\rightarrow0$, $x_3\rightarrow1$, and $x_4\rightarrow1$. Notice that if we choose any three values from $\{x_i|i=1,2,3,4\}$, the encoding is always like $x_k\rightarrow0$, $x_l\rightarrow0$, and $x_m\rightarrow1$ or, like $x_k\rightarrow1$, $x_l\rightarrow1$, and $x_m\rightarrow0$, $k\neq l\neq m$. 

However, for this type of encoding if Bob receives `1' or `6', he fails to identify the value of the random variable unambiguously with some nonzero probability. This is not allowed if Alice and Bob want to win the game. At the same time it is also important to maximize $\mathcal{P}_{avg}^{(4)}$. As we have seen, to maximize $\mathcal{P}_{avg}^{(4)}$, we have to maximize the individual quantities $(\mathcal{P}_{avg}^{(3)})_{klm}$. Ultimately, it is required to minimize the quantity $|\langle\phi_k|\phi_l\rangle|+|\langle\phi_l|\phi_m\rangle|+|\langle\phi_m|\phi_k\rangle|$, for $k,l,m\in\{1,2,3,4\}$ and $k\neq l\neq m$ and thus, we can think about the lower bound of (\ref{eq5}). We are now ready to present a similar problem as given in (\ref{eq8}) but for $d=4$.
\begin{equation}\label{eq13}
\begin{array}{l}
\mbox{We want:}~~\epsilon >0,\\[1 ex]
\mbox{Such that:}~ (1-|\langle\phi_i|\phi_{i^\prime}\rangle|)\geq\epsilon,\\[1 ex] \forall i, i^\prime = 1,2,3,4,~ i\neq i^\prime\\[1 ex]
\mbox{and}~~~~~~~~ \\[1 ex]
|\langle\phi_k|\phi_l\rangle|+|\langle\phi_l|\phi_m\rangle|+|\langle\phi_m|\phi_k\rangle| = |a|+|b|,\\[1 ex]
\forall k,l,m = 1,2,3,4, k\neq l\neq m. 
\end{array}
\end{equation}
One may think that for different sets of $\{k,l,m\}$, there should be different sets of $\{a,b\}$. But it is not the case. In particular, we will fix $\epsilon$ first then we will fix the condition following the previous section, i.e., $|\langle\phi_k|\phi_l\rangle|+|\langle\phi_l|\phi_m\rangle|+|\langle\phi_m|\phi_k\rangle| = 1-\epsilon+\sqrt{2\epsilon - \epsilon^2}$. Remember that we want to achieve the lower bound of (\ref{eq5}) as it helps to maximize the individual $(\mathcal{P}_{avg}^{(3)})_{klm}$ and thereby $\mathcal{P}_{avg}^{(4)}$. Now, for achieving lower bound of (\ref{eq5}), the if and only if condition is $|a_1| + |a_2| = 1$. Ultimately, the solution of the problem (\ref{eq13}) turns out to be the following:
\begin{equation}\label{eq14}
\begin{array}{c}
\ket{\phi_1} = \ket{\phi},~
\ket{\phi_2} = (1-\epsilon)\ket{\phi}+\sqrt{2\epsilon-\epsilon^2}\ket{\phi^\perp},\\[1 ex]
\ket{\phi_3} = \ket{\phi^\perp},
\ket{\phi_4} = \sqrt{2\epsilon-\epsilon^2}\ket{\phi}-(1-\epsilon)\ket{\phi^\perp}.
\end{array}
\end{equation}
Now, if we choose any three states from the above then the condition $|\langle\phi_k|\phi_l\rangle|+|\langle\phi_l|\phi_m\rangle|+|\langle\phi_m|\phi_k\rangle| = 1-\epsilon+\sqrt{2\epsilon - \epsilon^2}$ is satisfied. It is also easily verifiable that $(1-|\langle\phi_i|\phi_{i^\prime}\rangle|)\geq\epsilon$, $\forall~i, i^\prime = 1,2,3,4,~ i\neq i^\prime$. In this regard, note that we have assumed $|a|\geq|b|$, so, $(1-\epsilon)\geq\sqrt{2\epsilon-\epsilon^2}$, resulting the fact that $1-|\langle\phi_2|\phi_3\rangle|$ = $1-|\langle\phi_1|\phi_4\rangle|$ = $1-\sqrt{2\epsilon - \epsilon^2}\geq\epsilon$. Notice that $\forall i = 1,2,3,4$, $\ket{\phi_i}$s of (\ref{eq14}) belong to two dimensional Hilbert space and thus, by sending a qubit from Alice's side to Bob the problem of (\ref{eq13}) can be solved and thereby, the game can be won for $d=4$ and $n=2$.

\subsection*{Role of mutually unbiased bases}
{\it Example 3.}~If we think about maximizing $\epsilon$ within the problem of (\ref{eq13}), then the only possibility is that we take $|a|=|b|=1/\sqrt{2}$, i.e., $(1-\epsilon) = \sqrt{2\epsilon-\epsilon^2} = 1/\sqrt{2}$. This is a special case of the problem of (\ref{eq13}). The maximum value of $\epsilon$ is given by $1-(1/\sqrt{2})$, achievable through the encoding $x_i\rightarrow\ket{\phi_i},~\forall i = 1,2,3,4$, where the states $\ket{\phi_1} = \ket{\phi},~
\ket{\phi_2} = (\ket{\phi} + \ket{\phi^\perp})/\sqrt{2},~
\ket{\phi_3} = \ket{\phi^\perp}$, $\ket{\phi_4} = (\ket{\phi}-\ket{\phi^\perp})/\sqrt{2}$, $|\langle\phi_k|\phi_l\rangle|+|\langle\phi_l|\phi_m\rangle|+|\langle\phi_m|\phi_k\rangle| = \sqrt{2}$, and $\mathcal{P}_{avg}^{(3)} = \mathcal{P}_{avg}^{(4)} = 1-(\sqrt{2}/3)$. Notice that $\ket{\phi_i}$s are from mutually unbiased bases of two dimensional Hilbert space. We now put this observation in a proposition form.
\begin{proposition}\label{prop5}
To maximize $\epsilon$ in the problem of (\ref{eq13}), it is necessary and also sufficient to encode the values of a four dimensional random variable, within the states of mutually unbiased bases\footnote{We consider two bases $\mathcal{B}_1$ = \{$\ket{\nu_1}$, $\ket{\nu_2}$, \dots, $\ket{\nu_{\mathcal{D}}}$\} and $\mathcal{B}_2$ = \{$\ket{\nu_1^\prime}$, $\ket{\nu_2^\prime}$, \dots, $\ket{\nu_{\mathcal{D}}^\prime}$\} in a ${\mathcal{D}}$-dimensional Hilbert space. We say that these bases are mutually unbiased if and only if $|\langle\nu_i|\nu_{i^\prime}^\prime\rangle|$ = $1/\sqrt{{\mathcal{D}}}$, for every $i,i^\prime$. For details regarding mutually unbiased bases, one can have a look into \cite{Ivonovic81, Ivanovic97, Bandyopadhyay02}.}.
\end{proposition}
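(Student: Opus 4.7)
For the sufficiency direction I would verify Example~3 in place: the encoding $\ket{\phi_1}=\ket{\phi}$, $\ket{\phi_2}=(\ket{\phi}+\ket{\phi^\perp})/\sqrt{2}$, $\ket{\phi_3}=\ket{\phi^\perp}$, $\ket{\phi_4}=(\ket{\phi}-\ket{\phi^\perp})/\sqrt{2}$ partitions the four states into two orthonormal bases $\mathcal{B}_1=\{\ket{\phi_1},\ket{\phi_3}\}$ and $\mathcal{B}_2=\{\ket{\phi_2},\ket{\phi_4}\}$ whose cross-overlaps all equal $1/\sqrt{2}$; hence $\mathcal{B}_1$ and $\mathcal{B}_2$ are mutually unbiased in $\mathbb{C}^2$, every non-orthogonal pair gives $(1-|\langle\phi_i|\phi_{i'}\rangle|)=1-1/\sqrt{2}$, and every three-term sum equals $\sqrt{2}=|a|+|b|$ with $|a|=|b|=1/\sqrt{2}$. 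All constraints of (\ref{eq13}) are therefore met, and $\epsilon=1-1/\sqrt{2}$ is achieved.

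For the necessity direction I would start from the canonical form (\ref{eq14}), which the discussion preceding Example~3 pins down as the general family of admissible encodings once the three-term constraint of (\ref{eq13}) and Proposition~\ref{prop4} are imposed. In this family the six pairwise overlap magnitudes reduce to the three values $\{0,|a|,|b|\}$ with $|a|=1-\epsilon$, $|b|=\sqrt{2\epsilon-\epsilon^2}$, under the convention $|a|\geq|b|$. The first constraint of (\ref{eq13}) then reduces to the single inequality $1-\sqrt{2\epsilon-\epsilon^2}\geq\epsilon$, equivalently $2\epsilon^2-4\epsilon+1\geq 0$, whose largest admissible root in $(0,1)$ is $\epsilon=1-1/\sqrt{2}$. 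Equality forces $1-\epsilon=\sqrt{2\epsilon-\epsilon^2}$, hence $|a|=|b|=1/\sqrt{2}$, and (\ref{eq14}) collapses precisely to the MUB encoding verified above.

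The only subtle point I anticipate is that an optimum is MUB-valued only up to a global unitary on $\mathbb{C}^2$, a relabeling of the four outcomes, and overall phases on individual kets; but the definition of MUBs is itself invariant under exactly these operations, so the statement ``an optimum must consist of states drawn from two MUBs of $\mathbb{C}^2$'' is unambiguous. Apart from this remark, the entire argument reduces to the quadratic inequality above together with the direct verification in the first paragraph.
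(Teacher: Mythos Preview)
Your proposal is correct and follows essentially the same route as the paper: the paper's proof consists of the single sentence ``follows from Example~2 and Example~3,'' and your argument is precisely an unpacking of those examples, using the parametrized family~(\ref{eq14}) together with Proposition~\ref{prop4} for necessity and the explicit MUB encoding for sufficiency. The one place where you go slightly beyond the paper is in writing out the quadratic $2\epsilon^2-4\epsilon+1\ge 0$ and identifying its relevant root, whereas the paper simply asserts that ``the only possibility is that we take $|a|=|b|=1/\sqrt{2}$''; your version makes transparent why that is the only possibility, but the underlying logic is identical.
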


The proof of the above proposition follows from {\it Example 2} and {\it Example 3}.

\section{Higher dimensional random variable}\label{sec4}
In this section, we consider that the dimension of the random variable is $d\geq5$. But still we consider that $n=2$. Ultimately, Bob receives a `$j$' value from the Referee and then try to distinguish between two states, where the encoding in the quantum scenario is given by $x_i\rightarrow\ket{\phi_i},~\forall i \in \{1,2,\dots,d\}$. Following similar argument as given in the previous section, it can be argued that by sending only a cbit, the game cannot be won when $d\geq5$ too. We can define $\mathcal{P}_{avg}^{(d)}$ in the similar way as we did for $\mathcal{P}_{avg}^{(4)}$ and $\mathcal{P}_{avg}^{(3)}$. $\mathcal{P}_{avg}^{(d)}$ is given by-
\begin{equation}\label{eq15}
\begin{array}{c}
\mathcal{P}_{avg}^{(d)} = 1 - \frac{2}{d(d-1)}(\sum_{i, i^\prime}|\langle\phi_i|\phi_{i^\prime}\rangle|)\\[2 ex]

= \frac{6}{d(d-1)(d-2)}[\sum_{k,l,m}(\mathcal{P}_{avg}^{(3)})_{klm}],
\end{array}
\end{equation}
where $i,i^\prime,k,l,m\in\{1,2,\dots,d\}$, $i\neq i^\prime$ and $k\neq l\neq m$. $(\mathcal{P}_{avg}^{(3)})_{klm}$ are similar quantities as defined in the previous section. The second line of the above equation tells us that if the quantities $(\mathcal{P}_{avg}^{(3)})_{klm}$ are maximized then the quantity $\mathcal{P}_{avg}^{(d)}$ will be maximized. We know that $(\mathcal{P}_{avg}^{(3)})_{klm}$ can be maximized when we use the following (classical) encoding $x_k\rightarrow0$, $x_l\rightarrow0$, and $x_m\rightarrow1$ (or, in other words, any two of the three randomly chosen values of a random variable are encoded against the same bit value and the third value of the random variable is encoded against the orthogonal bit value). It is easy to check that when $d\geq5$, it is not possible to have an encoding strategy such that among $d$ values of a random variable if we randomly choose three values $x_k$, $x_l$, and $x_m$ then corresponding quantity, $(\mathcal{P}_{avg}^{(3)})_{klm}$ is maximum. Nevertheless, what could be a sensible choice here is that we can fix a number $N<\frac{d!}{3!(d-3)!}$ for $d\geq5$ and we can maximize this number $N$. The significance of this number is that we can get at least $N$ ensembles of randomly chosen $\{x_k, x_l, x_m\}$ such that the quantity $(\mathcal{P}_{avg}^{(3)})_{klm}$ is maximum. So, the question is: What is the encoding strategy corresponding to maximum $N$? We first adopt an encoding strategy for even $d$, where half of the values of the random variable are encoded against 0 and the other half against 1. Here, $N$ = $\frac{d!}{3!(d-3)!} - 2\cdot\frac{(d/2)!}{3!(d/2-3)!}$. If we encode $(d/2+d^\prime)$ values of the random variable against a particular bit value and the remaining values of the random variable against the orthogonal bit value, then, the value of $N$ becomes strictly less than $\frac{d!}{3!(d-3)!} - 2\cdot\frac{(d/2)!}{3!(d/2-3)!}$ for any nonzero $d^\prime$ ($d^\prime$ is an integer). In this way, we argue that the strategy, where half of the values of the random variable are encoded against a particular bit value and the other half against the orthogonal bit value, is the best strategy, i.e., $\mathcal{P}_{avg}^{(d)}$ is maximized here, when $d$ is even. In the same way (as argued for even $d$), it is possible to show that for odd $d$ one can adopt a strategy where $(d-1)/2+1$ values of the random variable can be encoded against a particular bit value while $(d-1)/2$ values of the random variable can be encoded against the orthogonal bit value. In fact, this strategy is best strategy for odd $d$ in a sense that for this strategy $N$ is going to be maximum and thereby, the quantity $\mathcal{P}_{avg}^{(d)}$ will be maximum. However, we have to remember that these strategies do not help to win the game. 

We can also define a similar problem as given in (\ref{eq13}) for $d\geq5$. But not for all randomly chosen $x_k, x_l, x_m$ the relation $|\langle\phi_k|\phi_l\rangle|+|\langle\phi_l|\phi_m\rangle|+|\langle\phi_m|\phi_k\rangle| = |a|+|b|$ can be true. This can be easily checked. So, again we have to consider the number $N$ and we have to maximize $N$, such that at least we can get $N$ ensembles of randomly chosen $x_k,x_l,x_m$ for which the relation $|\langle\phi_k|\phi_l\rangle|+|\langle\phi_l|\phi_m\rangle|+|\langle\phi_m|\phi_k\rangle| = |a|+|b|$ can be true. This modified version of the problem is particularly important if we try to maximize $\epsilon$ when $d=5$ or, $d=6$. Because, in these cases, one can consider encoding the values $x_5$ and $x_6$ against the eigenvectors of the Pauli matrix $\sigma_y$ to solve the problem. In higher dimensions $(d>6)$, it is not known how to solve this problem with maximum $\epsilon$. However, if we just think about winning the game dropping the condition of satisfying the relation $|\langle\phi_k|\phi_l\rangle|+|\langle\phi_l|\phi_m\rangle|+|\langle\phi_m|\phi_k\rangle| = |a|+|b|$, then it is possible by sending only a qubit even if $d>6$. The reason is described in the following.

\subsection*{Large quantum-classical separation} 
Suppose, the dimension of the random variable, the value of which is sent to Alice, is `$d>6$'. [Previously, we have discussed how to maximize $\mathcal{P}_{avg}^{(d)}$ for $d\geq5$. But with that strategy the game cannot be won. Here we want to discuss a strategy for $d>6$, with which the game can be won.] But ultimately, Referee sends the information of a random set of cardinality two to Bob. Here also, Alice is allowed to send Bob one (qu)bit of information. In this scenario, we can establish quite high advantage of quantum communication over its classical counterpart. In brief, we say this as `large quantum-classical separation'. However, we mention that this advantage is demonstrated with respect to a specific goal, i.e., with some non-zero probability, Bob has to identify the value of the random variable, sent to Alice by the referee no matter which value she receives or what information Bob receives from the referee. This `high advantage' is explained in a later portion. This is based on the fact that two quantum states $\ket{0}$ and $\ket{1}$ can be superposed in infinitely many ways. We suppose that the superposed states are $a_i\ket{0}+b_i\ket{1}$, where $|a_i|^2+|b_i|^2=1$ and $|a_i|$, $|b_i|$ are nonzero. It is also possible to ensure that for any value of $d$, encoding process can be done in a way if any two states are chosen then they must be linearly independent. The linear independence part is to confirm the unambiguous identification of the value of the random variable. 

Here the parties apply the following encoding process: $x_i \rightarrow a_i\ket{0}+b_i\ket{1}$, $\forall i=1,2,\dots,d$ and both $a_i,b_i$ are nonzero. For simplicity, we can take them as positive. Next, we assume that between two arbitrary states $a_l\ket{0}+b_l\ket{1}$ and $a_{l^\prime}\ket{0}+b_{l^\prime}\ket{1}$, Bob has to distinguish unambiguously with some nonzero probability. So, we have to find out the condition for which these two states can be linearly independent. We take $c_1(a_l\ket{0}+b_l\ket{1})+c_2(a_{l^\prime}\ket{0}+b_{l^\prime}\ket{1})\equiv(0,0)$ or, $(c_1a_l+c_2a_{l^\prime})\ket{0}+(c_1b_l+c_2b_{l^\prime})\ket{1}\equiv(0,0)$. Thus, $(c_1a_l+c_2a_{l^\prime}) = 0 = (c_1b_l+c_2b_{l^\prime})$. Clearly, for different positive values of $a_l, a_{l^\prime}, b_l, b_{l^\prime}$, both $(c_1a_l+c_2a_{l^\prime})$ and $(c_1b_l+c_2b_{l^\prime})$ are zero when $c_1=c_2=0$ or $\frac{a_l}{b_l} = \frac{a_{l^\prime}}{b_{l^\prime}}$. But the second condition does not arise if $a_l$ and $a_{l^\prime}$ are different since $a_i^2+b_i^2=1$ $\forall i = l, l^\prime$. Therefore, the only option which is left is $c_1=c_2=0$. This implies that the states $a_l\ket{0}+b_l\ket{1}$ and $a_{l^\prime}\ket{0}+b_{l^\prime}\ket{1}$ are linearly independent and they can be distinguished unambiguously with some nonzero probability. In this way, we can construct a strategy of sending a qubit, via which it is always possible to identify the value of the random variable unambiguously with some nonzero probability under the present conditions.

On the other hand, by sending a cbit it is not possible to identify the value of the random variable unambiguously with some nonzero probability under the present conditions. The proof is due to the similar argument as given in the proof of Proposition \ref{prop1}. Moreover, recall that `$d$' can be anything. Arguably, for a very large `$d$', to accomplish the present task Alice must send a large number of classical bits. Therefore, in the present scenario, a qubit is always effective but a large number of cbits may not be. In this way, one can realize a large quantum-classical separation. In fact, theoretically this separation can be arbitrary large. However, if Alice sends only a qubit for winning the game the success probability of unambiguously identifying the value of the random variable with increasing $d$, must be decreasing. Clearly, for a large value of $d$, it may not be possible to quantify the small value of success probability (individual or average) experimentally. Again, how far, these success probabilities can be determined experimentally, is a completely different problem and we are leaving it for future studies. 

\section{General description of the game}\label{sec5}
The general description of the game is given as the following. There are three spatially separated parties, Referee, Alice, and Bob. The Referee sends a value $x_i\in\{x_1,x_2,\dots,x_d\}$ of a random variable $X$ to Alice, where $d$ is the dimension of $X$. At the same time, Referee also sends the information $j$ of a random set $S_j$ to Bob such that $S_j$ contains the particular $x_i$ which is sent to Alice, along with some other value(s) $x_{i^\prime}\in\{x_1,x_2,\dots,x_d\}$, but $i\neq i^\prime$. Note that Alice does not know the information $j$ which is sent to Bob by the Referee and similarly, Bob does not know the information of $x_i$ which is sent to Alice by the Referee. The task of Bob is to identify the value of the random variable which is sent to Alice by the Referee. Clearly, the question of interest is if the task can be accomplished for any value of $j$. However, to help Bob, Alice sends $n$-level information to Bob regarding $x_i$. This communication is one-way, i.e., there is no communication from Bob's side to Alice's side. But before the game starts, they (Alice and Bob) can fix an encoding-decoding strategy. Notice that if $n=d$ then the scenario is trivial, i.e., Bob is able to identify the value of the random variable perfectly (with 100 $\%$ certainty) for any value of $j$. When $n<d$ (i.e., $n$ is limited), Bob is not able to identify the value of the random variable perfectly for all values of $j$. Then, it can be explored, how well Bob can identify the value of the random variable for any value of $j$. 

We mention that for unambiguous identification, we have to keep the size of the set $S_j$, i.e., $|S_j|=n$, $2\leq n<d$. Because if $|S_j|>n$, then, unambiguous identification of the value of the random variable is clearly not possible. Now, when a value of the random variable is sent to Alice, information of a set is sent to Bob. This set must contain the value which is sent to Alice along with other $n-1$ values. Here the question is how many such sets are possible? This is clearly equal to $\binom{d}{n} = \frac{d!}{n!(d-n)!}$. These numbers define different values of $j$. Remember that Alice does not know the information of which set is sent to Bob, but she knows that the set must contain the value the random variable, which she has received, along with some other value(s). We mention that the set of values of the variable $X$, i.e., $\{x_1, x_2, ..., x_d\}$ is known to both Alice and Bob. The value of $n$ is also known to Alice and Bob. Based on the values of $d$ and $n$, several random sets can be defined, these definitions are also known to them. 

\subsection*{n $>$ 2 case}
We next consider that when the dimension of the random variable is `$d$', Referee sends a random set of cardinality `$n$' to Bob. In this case, if $n>2$, then sending a qubit from Alice's side to Bob will not help in accomplishing the task. In particular, it is possible to show that the above is solvable if Alice is allowed to send a qunit ($n$-level quantum system) to Bob. On the other hand, sending a cnit ($n$-level classical system) will not help in accomplishing the task of identifying the value of the random variable unambiguously with some nonzero probability for all $j$ values. We mention that in the classical case, an $n$-level information is defined by cnit which can have the values 0, 1, ..., $n-1$. In the quantum case, an $n$-level information is defined by qunit which can have the states $\ket{0}$, $\ket{1}$, ..., $\ket{n-1}$. The general case is quite straightforward. Here we only discuss the case when $d=4$ and $n=3$, that is, Alice is given $x_i\in\{x_1, x_2, x_3, x_4\}$ and Bob is given `$j$' where `$j$' is associated with $S_j$, $\forall j=1,2,3,4$. Here, $S_1 = \{x_1, x_2, x_3\}$, $S_2 = \{x_1, x_3, x_4\}$, $S_3 = \{x_1, x_2, x_4\}$, and $S_4 = \{x_2, x_3, x_4\}$.

We assume that Alice is allowed to send a ctrit (three-level classical system) to Bob. The values of the random variable are equally probable and thus, the sets $S_j$ are also equally probable. So, here Alice has to compute a function which must output different values for different $x_i$, otherwise, unambiguous identification of $x_i$ is impossible for all values of $j$. Now, even if computation of such a function is possible, encoding the values of the function corresponding to different $x_i$, within a three-level classical system is impossible. Thus, it is not possible for Alice and Bob to win that game when Alice is allowed to send only a ctrit to Bob.

However, it is possible to construct a quantum strategy via which Bob can identify the value of the random variable unambiguously with some nonzero probability for all values of $j$ when Alice is sending only a qutrit to Bob. The encoding strategy is given as the following: $x_i \rightarrow \ket{\phi_i}$, $\forall i=1,2,3,4$. We can take $\ket{\phi_i}$ as linearly independent states $\forall i=1,2,3$ and we can take $\ket{\phi_4}$ as $a_1\ket{\phi_1}+a_2\ket{\phi_2}+a_3\ket{\phi_3}$, where $|a_i|$ are nonzero. $a_i$ are chosen in such a way that $\ket{\phi_4}$ must be a valid state. For simplicity, one can simply take $\ket{\phi_i}$ as $\ket{i}$, $i=1,2,3$. Here, $\{\ket{i}\}$ forms a basis for qutrit system. Now, notice that for any value of `$j$', Bob is left with three linearly independent vectors which can be distinguished unambiguously with some nonzero probability. Therefore, the value of the random variable can be identified unambiguously with some nonzero probability for all values of $j$. We mention that if $\{\ket{\phi_1}, \ket{\phi_2}, \ket{\phi_3}\}$ is a basis, then we say that the states of the form $\mu_1\ket{\phi_1}+\mu_2\ket{\phi_2}+\mu_3\ket{\phi_3}$ is a coherent state of coherence rank three with respect to the considered basis, here $|\mu_i|>0$ $\forall i$. We now provide the following proposition:

\begin{proposition}\label{prop6}
For winning the game, it is necessary and also sufficient that the state $\ket{\phi_4}$ must have coherence rank three with respect to the basis $\{\ket{\phi_1}, \ket{\phi_2}, \ket{\phi_3}\}$.
\end{proposition}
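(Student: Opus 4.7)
The plan is to reduce the winning condition for this sub-game to a linear-independence requirement on each of the four encoding triples, and then read off the coherence-rank-three condition on $\ket{\phi_4}$ directly from that requirement.

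First I would invoke the standard criterion (Ref.~\cite{Chefles98}, already used throughout the paper) that a collection of pure quantum states admits unambiguous discrimination with nonzero probability if and only if the states are linearly independent. Winning the game with the qutrit encoding $x_i\rightarrow\ket{\phi_i}$ is therefore equivalent to the statement that, for every $j\in\{1,2,3,4\}$, the three states labelled by $S_j$ are linearly independent in the three-dimensional Hilbert space.

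Next I would use the set $S_1=\{x_1,x_2,x_3\}$ to pin down $\{\ket{\phi_1},\ket{\phi_2},\ket{\phi_3}\}$ as a basis of the qutrit space: this is simply the $j=1$ instance of the criterion, and any encoding violating it already loses the game. Expanding $\ket{\phi_4}=a_1\ket{\phi_1}+a_2\ket{\phi_2}+a_3\ket{\phi_3}$ in that basis, I would then observe that each of the remaining sets $S_2,S_3,S_4$ consists of $\ket{\phi_4}$ together with two of the three basis vectors, leaving out exactly one index $m\in\{1,2,3\}$ (for $S_2$ we have $m=2$, for $S_3$ we have $m=3$, and for $S_4$ we have $m=1$). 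By uniqueness of expansion in the basis, linear independence of the triple $\{\ket{\phi_k},\ket{\phi_l},\ket{\phi_4}\}$ with $\{k,l,m\}=\{1,2,3\}$ is equivalent to $\ket{\phi_4}\notin\mathrm{span}\{\ket{\phi_k},\ket{\phi_l}\}$, which in turn is equivalent to $a_m\neq 0$.

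Combining the three cases, the winning condition is equivalent to $|a_1|,|a_2|,|a_3|>0$, which is precisely the statement that $\ket{\phi_4}$ has coherence rank three with respect to the basis $\{\ket{\phi_1},\ket{\phi_2},\ket{\phi_3}\}$, thereby establishing necessity and sufficiency simultaneously. I do not anticipate a substantial obstacle: the only bookkeeping step is to verify that the three sets $S_2,S_3,S_4$ really do sweep out all three possible values of the missing index $m$, which is immediate from the explicit list of the $S_j$ given just before the proposition.
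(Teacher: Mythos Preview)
Your proposal is correct and follows essentially the same approach as the paper: both arguments reduce the winning condition to linear independence of each triple via Chefles' criterion, and then observe that the missing coefficient $a_m$ controls the independence of $\{\ket{\phi_k},\ket{\phi_l},\ket{\phi_4}\}$. Your version is somewhat more explicit in tracking which set $S_j$ corresponds to which missing index $m$, whereas the paper simply asserts that coherence rank less than three forces some triple to be dependent; the underlying content is the same.
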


\begin{proof}
The states $\ket{\phi_1}$, $\ket{\phi_2}$, and $\ket{\phi_3}$ are linearly independent. So, they form a basis for a qutrit (three-level quantum system) system. The sufficient condition follows from the fact that there exists a strategy which is given above. The necessary condition follows from a couple of arguments. If coherence rank is less than three then there is at least one value of `$j$' for which three states (to be distinguished by Bob) are not linearly independent and thus, the unambiguous discrimination of such states is not possible. Furthermore, coherence rank cannot be greater than three when Alice is sending a qutrit to Bob. These complete the proof.
\end{proof}

From the above example, it is quite realizable that using the same model it is possible to show -- a qunit is more powerful resource than a cnit in the context of the present task. Again, if the dimension of the random variable `$d$' is very large then also it is possible to show that a qunit can provide advantage over a large number of cnits in the context of achieving a specific goal in our game. 

\section{Conclusion}\label{sec6}
To develop quantum technologies, it is necessary to explore what advantages one can achieve using quantum resources over their classical counterparts. Furthermore, it is also important to identify the scenarios where it is not possible to achieve such advantages. 

In this work, we have designed a task which can be described in terms of a game, mediated by Referee and played between Alice and Bob. Referee sends Alice a value of a random variable. At the same time, Referee also sends Bob some partial information regarding that value. Here partial information can be defined in the following way. Bob gets the information of a random set which must contain the value of the variable, that is sent to Alice by the Referee, along with other value(s). Alice is not allowed to know what information is sent to Bob by the Referee. Again, Bob does not know which value of the random variable is sent to Alice. Now, the game can be won if and only if Bob can unambiguously identify the value of the variable with some nonzero probability, no matter what information Bob receives or which value is sent to Alice. However, to help Bob, Alice sends some limited information to him based on any pre-decided strategy.

For this game, we have shown an advantage of sending a qubit over cbit(s). However, whether there is at all any quantum advantage, depends on the goal, we set. In particular, we have proved that in some scenarios, it is never possible to achieve any quantum advantage. We also mention that to establish quantum advantage, it is not necessary to share entanglement among the spatially separated parties in the present game. Actually, here quantum coherence is playing the key role. We further have analyzed several variants of the game and provided certain bounds on the success probabilities. Moreover, we have established connections between trine ensemble, mutually unbiased bases, and the encoding-decoding strategies of the variants. In fact, our games should be treated as applications of trine ensemble, mutually unbiased bases, and quantum coherence.

To understand the application of the present game, it is required to explain its similarity with quantum dense coding protocol \cite{Bennett92}. Suppose, we consider the simplest case of our game, i.e., $d =3$ case. In this case, Alice is given a random two-bit string which belongs to the set $\{00, 01, 10\}$. One can think that these are basically values of the random variable. There is a limited communication from Alice's side to Bob which is one qubit or one cbit. The task of Bob is to identify the bit string error-freely. The only difference between the dense coding and our game, is that in the former protocol there is entanglement present between Alice and Bob while in our case, there is no entanglement present between Alice and Bob. Instead, Bob is receiving an additional information (which is from the Referee) in our case. Like dense coding, here also, when Alice communicates a qubit, more information can be extracted by Bob regarding the bit string of Alice. Now, the setting of dense coding is well established in quantum information theory. Therefore, exploiting its connection with our game, one can think about various applications in information processing protocols. However, further analysis is required to exhibit such applications explicitly.

Finally, we want to talk about the experimental realization of the game. As explained in the above, our game has similarity with the setting of dense coding protocol. In fact, the dense coding protocol was experimentally demonstrated several years back \cite{Mattle96}. So, we believe that there is a possibility to demonstrate our games experimentally. However, we mention that here we have considered a probabilistic setting. In this regard, we mention about Ref.~\cite{Webb23} where optimal unambiguous state elimination problem has been demonstrated experimentally. Now, in our case when $n=2$, state elimination and state discrimination are equivalent. So, there is a possibility of demonstrating some versions of our game experimentally. Nevertheless, with increasing dimension of the random variable the situation will become more complex.

For further research, we leave the following open questions. What will happen in our communication tasks when extra resources like randomness, entanglement etc. are provided between Alice and Bob?

\section*{Acknowledgments}
We acknowledge discussion with Michał Parniak, Michał Lipka, and Mateusz Mazelanik. This work was supported by the National Science Centre, Poland (Grant No. 2022/46/E/ST2/00115) and the ``Quantum Optical Technologies'' project, carried out within the International Research Agendas programme of the Foundation for Polish Science co-financed by the European Union under the European Regional Development Fund. M.B. acknowledges funding from the National Mission in Interdisciplinary Cyber-Physical systems from the Department of Science and Technology through the I-HUB Quantum Technology Foundation (Grant no: I-HUB/PDF/2021-22/008), support through the research grant of INSPIRE Faculty fellowship from the Department of Science and Technology, Government of India, and the start-up research grant from SERB, Department of Science and Technology (Grant no: SRG/2021/000267).

\bibliographystyle{apsrev4-2}
\bibliography{ref}
\end{document}